\def\BibTeX{{\rm B\kern-.05em{\sc i\kern-.025em b}\kern-.08em
    T\kern-.1667em\lower.7ex\hbox{E}\kern-.125emX}}
\newtheorem{theorem}{Theorem}[section]
\newtheorem{definition}{Definition}
\begin{document}
\title{Time-Varying Kinematics Control for Magnetically-Actuated Satellite Swarm\\without Additional Actuator}% in Distributed Space System}% -- IAC-23,C1,IPB,11,x77992}
%\begin{comment}
%名前と所属は、原稿、付録、補足資料のどこにも含めないでください。
%人々や資金提供機関への謝辞は、原稿が受理された後にのみ記載してください。
\author{Yuta Takahashi$^{1}$
\thanks{$^{1}$Graduate Student, Department of Mechanical Engineering, Tokyo Institute of Technology, Ookayama Meguro, Tokyo 152-8552, Japan{\tt\footnotesize takahashi.y.cl@m.titech.ac.jp}},
%\thanks{$^{2}$Researcher, Satellite Research and Development, Interstellar Technologies Inc., 149-7 Hiroo, Hokkaido 089-2113, Japan}
Hiraku Sakamoto$^{2}$
\thanks{$^{2}$Professor, Department of Mechanical Engineering, Tokyo Institute of Technology, Ookayama Meguro, Tokyo 152-8552, Japan},
Shin-ichiro Sakai$^{3}$
\thanks{$^{3}$Professor, Department of Spacecraft Engineering, Japan Aerospace Exploration Agency, 3-1-1 Yoshinodai, Sagamihara, Kanagawa 252-5210, Japan}%%
%\thanks{Digital Object Identifier (DOI): see top of this page.}
}
\maketitle
\begin{abstract}
Electromagnetic Formation Flight is a technology that uses electromagnetic forces and torques to control multiple satellites without conventional fuel-based propulsion. In this paper, the controllability of the system is discussed based on the conservation of the entire system's angular momentum, which constitutes a nonholonomic constraint. This paper designs a new controller for multiple satellites without an additional attitude actuator.
%This paper presents a learning-based framework for approximating an exact magnetic-field interaction model with numerical and experimental validations. High-fidelity magnetic-field interaction modeling is essential for achieving exceptional accuracy and responsiveness across diverse fields, including transportation, energy systems, medicine, biomedical robotics, and aerospace robotics. In aerospace engineering, this actuation has been investigated for fuel-free attitude and formation control for multiple satellites. Since the exact magnetic field is computed from Biot–Savart's law and is computationally intensive, previous studies have typically relied on the dipole approximation to reduce computational cost. However, this model loses accuracy in proximity operations, leading to unstable behavior and even collisions. To address this limitation, we develop a learning-based approximation framework that accurately reproduces the exact field while significantly reducing computational cost. The proposed method provides a certified error bound, calculated based on the number of training samples, ensuring reliable prediction accuracy. The learned model can apply the different-sized coil interaction using appropriate transformations, without requiring retraining. To address a challenging situation, a spacecraft docking scenario is considered to verify the effectiveness of the proposed framework through both numerical simulations and experimental validation.
\end{abstract}
\begin{IEEEkeywords}
Electromagnetic Formation Flight, Spacecraft Swarm, Distributed Space System, Nonholonomic Mechanical System
\end{IEEEkeywords}
%\eabstract{Electromagnetic Formation Flight is a technology that uses electromagnetic force and electromagnetic torque to control multiple satellites without using conventional fuel-based propulsion. In this paper, the controllability of the system is discussed based on the conservation of the entire system's angular momentum, which constitutes a nonholonomic constraint. This paper designs a new controller for multiple satellites without an additional attitude actuator.}
%\jabstract{磁気編隊飛行は燃料を消費せずに電磁力と電磁トルクを使用して複数衛星を制御する技術である．この論文では，システムの非ホロノミック拘束である角運動量保存則からシステムの可制御性について議論する．そして，追加の姿勢制御装置を使わずに磁気アクチュエータのみで複数衛星を制御するための新しい制御則を設計し評価する．}
%
\thispagestyle{plain}  % ★ ここを追加：表紙だけ plain スタイルにする
\section*{Nomenclature}
\vbox{\noindent\setlength{\tabcolsep}{0mm}%
\begin{tabular}{p{25mm}cl} %
\hfil $L_{\bm f}V(\bm x)$\hfil & :\hspace{4mm} & Lie derivative, $L_{\bm f}V(\bm x)=\frac{\partial V(\bm x)}{\partial \bm x}\bm f \in \mathbb{R}$\\
\hfil $\bm\Delta$\hfil & :\hspace{4mm} & accessibility distribution
\end{tabular}}

\noindent{Subscripts}

\vbox{\noindent\setlength{\tabcolsep}{0mm}%
\begin{tabular}{p{25mm}cl}
 \hfil$AC$\hfil  & :\hspace{4mm} & alternating current\\
 %\hfil$DC$\hfil   & :\hspace{4mm} & direct current\\
 \hfil$EMFF$\hfil  & :\hspace{4mm} & Electromagnetic Formation Flight\\
 \hfil$RWs$\hfil  & :\hspace{4mm} & 3-axis reaction wheels\\
  \hfil$STLC$\hfil  & :\hspace{4mm} & small-time local controllability
\end{tabular}}
\section{Introduction}
\label{sec1}
%\lettrine{M}{aintaining} stability and controllability 
Maintaining distributed space systems comprising a large number of satellites without conventional fuel-based propulsion could enable more robust, functional, and advanced long-term observation missions. These missions include sparse-aperture sensing, stellar interferometry, distributed antenna arrays, and distributed space telescopes \cite{chung2011swarms,hadaegh2014development,marco2019distributed}, which are difficult to realize with monolithic satellites. To be cost-effective compared to monolithic satellites, the swarm satellite must be smaller. On the other hand, maintaining a relative distance often requires constant control, thereby increasing the initial fuel reserve and the satellite's size. Electromagnetic Formation Flight (EMFF) can control multiple satellites without consuming propellant \cite{sakai2008electromagnetic,kwon2011electromagnetic,porter2014demonstration,youngquist2013alternating,nurge2016satellite,sunny2019single,huang2014stability,kaneda2008relative,miller2010control,ahsun2006dynamics,elias2007electromagnetic,schweighart2006electromagnetic,buck2013path,ramirez2010new,ayyad2019optimal,kaneda2008relative,abbasi2019decentralized,abbasi2020decentralized,zhang2016angular,takahashi2021simultaneous,takahashi2022kinematics}. EMFF works on the principle that satellites can be controlled by electromagnetic forces and torques between the satellites that result from the interactions between coil-generated electromagnetic fields. However, in most previous studies, all satellites were assumed to have reaction wheels (RWs) in addition to electromagnetic coils. This requirement determines the lower limit of the satellite size of the EMFF system. Previous study \cite{takahashi2021simultaneous,takahashi2022kinematics} suggests that the number of RWs in the EMFF system can be significantly reduced by treating EMFF as a nonholonomic mechanical system. EMFF is a nonholonomic mechanical system because it can only output electromagnetic forces and torques that satisfy the conservation of angular momentum of the entire system, which imposes nonholonomic constraints. 
%①事実はいままでRwを不要にする制御即はなかった，サイズは必要だった，サイズの低減は実現出来なかった　事実⇒第一段落
%②ノンホロノミック⇒評価されていない，高橋しかやってないけど，RWを不要にする衛星が作れる
%③　①と②を合わせれば小さい衛星が実現できる可能性がある，　背景ではなく目的
%１，２，３を並べて書くのは良くない，③まですれば劇的に小型化できる可能性を持つ，
%結論，小型化をもう一度思い出させる．アブストに書いてもいい．

Most previous studies of EMFF assume that all satellites have RWs; otherwise EMFF requires complicated control scheme specific to the system because nonholonomic mechanical systems \cite{brockett1983asymptotic,coron1992global,pomet1992explicit,morin1999design,m1993nonholonomic,m1997exponential,m1998time,pettersen1996position,pomet1994exponential,rosier1992homogeneous,kawski1990homogeneous,morin2002time} become uncontrollable by linearization; all state can not be exponentially converged to the target state by smooth state feedback \cite{brockett1983asymptotic}, and any exponentially stabilizing solutions are necessarily non-Lipschitz \cite{m1997exponential}. In previous studies of EMFF \cite{takahashi2021simultaneous,takahashi2022kinematics}, the conservation of the system's angular momentum was formulated and incorporated into the control law. This study avoided nonholonomic properties and achieved smooth state-feedback control using RWs, unlike the purpose of this paper. Previous studies of nonholonomic mechanical systems have shown that periodic time-varying feedback can stabilize control-linear nonholonomic systems to target states without additional actuators \cite{coron1992global}. Constructive control design methods for control-linear nonholonomic systems have been proposed \cite{pomet1992explicit,morin1999design}. Of these, the homogeneous feedback method \cite{morin1999design,kawski1990homogeneous,m1993nonholonomic,m1997exponential,m1998time,morin2002time,pettersen1996position},,which ensuresg exponential convergence,, has been extended to control-affine nonholonomic systems. %However, thereexist as few general solutions, and 
This method requires deriving a control law specific to the system that has not been derived in previous EMFF studies. 
%自分の論文を示唆する，高橋のメソッドを発展させるのが一番いい　ニーズがあるから提供する　ニーズがある　私だけしか解決出来ない秘密　ノンホロノミック拘束がある．　The objecteive of research is 　This research aimed at 　To this end 　this paper derives 　

Therefore, the objective of this research is to control the EMFF system without using RWs for the realization of distributed space systems of small satellites with the EMFF system. After controllability analysis of the Ethe MFF system without RWs, this paper derives a time-varying feedback control law, which guarantees that the origin of the Ethe MFF system is locally exponentially stable. The effectiveness of the designed control law is demonstrated through simulations of the formation reconfiguration for three satellites. 
 %The remainder of the paper is organized as follows. In Sec.~\ref{sec2}, basic formulations for EMFF control, including the AC modulation technique of EMFF, kinematics of EMFF, and EMFF dynamics, are introduced. Sec.~\ref{sec3} presents a new control law for EMFF without using RWs after the discussion of EMFF controllability. In Sec.~\ref{sec4}, the validity of the proposed control law is confirmed via numerical calculations of the formation reconfiguration using a three-satellite model. Sec.~\ref{sec5} proves the control law derived in Sec.~\ref{sec3}. Our conclusions are presented in Sec.~\ref{sec6}.
%\input{M1_2_section2.tex}
\section{Preliminaries}
\label{sec2}
This section outlines basic formulations for the controller design of EMFF without using RWs. First, an averaging and modulation technique based on the AC method of EMFF \cite{ayyad2019optimal} is outlined. Then, the kinematics and dynamics of the EMFF, derived in previous studies \cite{takahashi2021simultaneous,takahashi2022kinematics}, are introduced.
\subsection{First Order Averaging of Dipole Modulation of Alternating Current Method}
\label{sec2-1}
%In the previous chapter, the electromagnetic force and torque corresponding to the dipole moment 
%and the distance between satellites was shown. 
This subsection outlines 
%the dipole moment assignment when controlling an electromagnetic system and, 
an averaging and modulation technique based on the AC method of EMFF \cite{ayyad2019optimal}. In particular, the AC modulation technique using sine and cosine waves is shown.
%制御器によって目標出力$f_c, \tau_c$が決定されると，この出力を実現するため，システムで利用可能なすべての電磁コイルに磁気モーメントが割り当てられる．この割り当ては「dipole inversion」と呼ばれる．
%交流電流駆動のダイポールAのダイポールモーメント$\mu_a(t)$は式Eq()で表される．

In the EMFF system, each satellite is equipped with 3-axis electromagnetic coils and 
controlled by adjusting the value of the ``dipole moment,'' which is proportional to the current $c$. For a circular coil, the relationship between the AC-driven dipole moments $\bm{\mu}_j(t)$ of the $j$-th satellite and the alternating current is expressed in Eq.~\eqref{sec2:eq13}:
\begin{equation}
  \label{sec2:eq13}
  \begin{aligned}
        \bm\mu_j(t)&=N_tA\bm nc(\sin(\omega_{f}t)+\cos(\omega_{f}t))\\
        &=\bm\mu_{j}^{\sin}\sin(\omega_{f}t)+\bm\mu_{j}^{\cos}\cos(\omega_{f}t)\\
        %,\ \bm\mu_b(t)=\bm\mu_{(\sin)b}\sin(\omega_{fb}t+\theta_b),
  \end{aligned} 
\end{equation}
where $A$ is the area enclosed by the coil, $N_t$ is the number of coil turns, and $\bm n$ is the unit vector perpendicular to the plane of the coil. $\bm\mu_{(\sin)}$ and $\bm\mu_{(\cos)}$ are the AC amplitudes of the sine and cosine waves of the $j$-th satellite, respectively. Using the periodic dipole moment $\bm{\mu}_j(t)$, the time-varying electromagnetic force and torque of the AC method are derived. Because adapting the time-varying functions that control the system is complicated, they are approximated by first-order averaging over the period $T$ \cite{o1987averaging}. The averaged electromagnetic force $\bm f_{EM}^{(avg)}$ and torque $\bm\tau_{EM}^{(avg)}$ imparted by the system to the $j$-th dipole are expressed in Eq.~\eqref{sec2:eq15}. This approximation holds when the AC frequency $\omega_f$ is sufficiently high or when the dynamic frequency is sufficiently higher than the AC frequency \cite{o1987averaging}:
\begin{equation}
  \label{sec2:eq15}
  \left \{
  \begin{aligned}
        \bm f_{EMj}^{(avg)}&=\frac{1}{2}\sum_{j=1}^{n} \left(\bm{f}(\bm\mu_{i}^{\sin},\bm\mu_{j}^{\sin}, \bm r_{ij})+\bm{f}(\bm\mu_{i}^{\cos},\bm\mu_{j}^{\cos}, \bm r_{ij})\right)\\
        \bm{\tau}_{EMj}^{(avg)}&=\frac{1}{2}\sum_{j=1}^{n} \left(\bm{\tau}(\bm\mu_{i}^{\sin},\bm\mu_{j}^{\sin}, \bm r_{ij})+\bm{\tau}(\bm\mu_{i}^{\cos},\bm\mu_{j}^{\cos}, \bm r_{ij})\right)
  \end{aligned} 
  \right.
\end{equation}
where the electromagnetic force $\bm f (\bm\mu_i,\bm\mu_j, \bm{r}_{ij})$, and the electromagnetic torque $\bm\tau (\bm\mu_i,\bm\mu_j, \bm r_{ij})$ exerted on the $j$-th dipole by the $i$-th dipole are expressed as \cite{schweighart2006electromagnetic}
\begin{equation}
  \label{sec2:eq9}
  \begin{aligned}
    \bm{f}(\bm\mu_i,\bm\mu_j, \bm r_{ij})
    =&\frac{3\mu_0}{4\pi}\left(
            \frac{\bm\mu_i \cdot \bm\mu_j}{\|\bm{r}_{ij}\|^5} \bm r_{ij} \right .+
            \frac{\bm\mu_i \cdot \bm r_{ij}}{\|\bm r_{ij}\|^5}\bm\mu_{j}+
            \frac{\bm\mu_j \cdot \bm r_{ij}}{\|\bm r_{ij}\|^5}\bm\mu_{i}\\
             -&5\frac{(\bm\mu_i \cdot \bm r_{ij})(\bm\mu_j\cdot \bm r_{ij})}{\|\bm r_{ij}\|^7}\left . \bm r_{ij}\right),
   \end{aligned}
\end{equation}
\begin{equation}
  \label{sec2:eq10}
  \begin{aligned}
     \bm\tau (\bm\mu_i, \bm\mu_j, \bm r_{ij}) 
          &=\frac{\mu_0}{4\pi} \bm\mu_j \times \left(
        \frac{3\bm r_{ij}(\bm\mu_i \cdot \bm r_{ij})}{\|\bm r_{ij}\|^5}-\frac{\bm\mu_i}{\|\bm r_{ij}\|^3}
        \right),
\end{aligned}
\end{equation}
where $\bm r_{ij}$ is the position vector of the $j$-th satellite viewed from the $i$-th satellite. 
\subsection{Kinematics of EMFF without Attitude Actuator}
\label{sec2-2}
In this subsection, the kinematics of EMFF are derived using 
a nonholonomic constraint of the EMFF system as a preparation for the control design.% proposed in Sec.~\ref{sec3-4}. %In addition, a new EMFF state is defined based on the results of Secs.~\ref{sec3-2} and \ref{sec3-2-2}. It was shown that the EMFF using the AC method can generate arbitrary electromagnetic force and torque as calculated by the controller in Sec.~\ref{sec3-1}. However, the electromagnetic force and torque must satisfy the conservation of linear and angular momentum of the system, because the EMFF system can only output internal forces. To simultaneously control electromagnetic force and torque, it is necessary to determine the control direction that drives convergence to the control target while satisfying the system constraints. 
%As discussed in Sec.~\ref{sec3-2}, there are inputs that reach arbitrary states 
%while satisfying the conservation of linear and angular momentum of the system. 
 
%As described in Sec.~\ref{sec2},
%EMFF is a system associated with holonomic and nonholonomic constraints corresponding 
%to the conservation of the linear and angular momentum. %The conservation of linear momentum $\bm P$ in the system is expressed as 
%\begin{equation}
%   \label{sec3:eq2-2}
%    \sum_{i=1}^n m_i \frac{^Id}{dt}{\bm r}_i = \bm P.
%\end{equation} 
By using the conservation of linear momentum, $\bm r_1$ can be eliminated from the position vector $\bm r$ \cite{takahashi2021simultaneous,takahashi2022kinematics}, i.e., 
$
    ^ir=
    \begin{bmatrix}
   \ ^i{r}_2^{\mathrm{T}},
   \cdots,
   \ ^i{r}_{n}^{\mathrm{T}}
    \end{bmatrix}^{\mathrm{T}}\in \mathbb{R}^{(3n-3)\times 1}
$. Now, the kinematics of EMFF is derived using the system's nonholonomic constraints.
%Assuming that $m$ satellites (from a total of $n$) are equipped with RWs $(1 \leq m \leq n)$, 
By assuming the angular momentum $\bm L$ to be 0, the conservation of $\bm L$ %and the target angular momentum $\bm L_d$ 
are expressed as 
\begin{equation}
   \label{sec3:eq2-3}
   \begin{aligned}
   \sum_{j=1}^n \left( m_i(\bm r_{j}-\bm r_{1})\times \frac{^Id}{dt}{\bm r}_j + \bm I_j \cdot \bm\omega_j\right)  = \bm 0\\
   %\sum_{j=1}^n \left( m_i(\bm r_{jd}-\bm r_{1d})\times \frac{^Id}{dt}{\bm r}_{jd} + \bm I_j \cdot \bm\omega_{jd}\right) = \bm L_d, 
   \end{aligned}
\end{equation} 
%The orbital motion and the attitude motion can be separated by considering 
%the angular momentum around the center mass of the formation.
%Now, assuming $\bm L_d$ is the target angular momentum,
%the vector representation of $\bm L_d$ can be expressed as :
%\begin{equation}
%   \label{sec3:eq10-2}
%   %\begin{aligned}
%   \bm L_d  = \sum_{j=1}^n \left( m_i(\bm r_{jd}-\bm r_{1d})\times \frac{^Id}{dt}{\bm r}_{jd} + \bm I_j \cdot \bm\omega_{jd}\right)
%\end{equation}
%where $\bm r_{jd}$ and
%$\bm\omega_{jd}$ are the target position and target angular velocity of the $j$-th satellite, respectively. 
%ce the RWs of each satellite have the difference 
%between the angular momentum $\bm L$ and the target angular momentum $\bm L_d$ at target state, 
Then, the inertial coordinate component of $\bm L$ can be expressed as
\begin{equation}
   \label{sec3:eq3}
   \begin{aligned}
    %&\sum_{j=1}^n \left( m_i\left(\bm r_j-\bm r_1\right)\times \frac{^Id}{dt}{\bm r}_j + \bm I_j \cdot \bm\omega_j\right) + \sum_{j=1}^m\bm h_j = \bm L\\
        %\Leftrightarrow   
        %&\sum_{j=1}^n  \left(m_j\left(^i\tilde{r}_j-^i\tilde{r}_1\right)\ ^i\dot{r}_j + C^{I/B_j}J_j\ ^{b_j}\omega_j\right)+ \sum_{j=1}^m \left(C^{I/B_j}\left(\ ^{b_j}h_j -\ ^{b_j}h_{jd}\right) \right) = ^iL_d \\
       %\Leftrightarrow 
       \sum_{j=1}^n  \left(m_j\left(^i\tilde{r}_j-^i\tilde{r}_1\right)\ ^i\dot{r}_j + C^{I/B_j}J_j\ ^{b_j}\omega_j\right) = 0 \\
       %  \Leftrightarrow &\{\mathbb{I}\}^{\mathrm{T}} \left(\sum_{j=1}^n  \left(m_jC^{I/O}(^i\tilde{r}_j-\ ^i\tilde{r}_1)^ir_j + C^{I/B_j}J_j\ ^{b_j}\omega_j
    % + C^{I/B_i}(\ ^{b_i}h_i -\frac{1}{n}\ ^{b_i}L(t)) \right) =0\right)\\
    \end{aligned},
\end{equation}  
%As mentioned in Sec.~\ref{sec3-2-2}, considering the RW angular momentum as controlled states is necessary to prevent the non-uniform distribution problem. Therefore, 
%Now, the EMFF states $\zeta$ are defined in Eq.~\eqref{sec3:eq4-2}:
%\begin{equation}
%   \label{sec3:eq4-2}
%\zeta=
%    \left[
%   ^i\dot{r}^{\mathrm{T}},\ 
%   ^b\omega^{\mathrm{T}}
%    \right]^{\mathrm{T}}\in \mathbb{R}^{(6n-3)\times 1}.
%\end{equation}
%However, not all states can be controlled from the viewpoint of a nonholonomic system. As mentioned in Sec.~\ref{sec3-2}, the $n$ satellites of the EMFF system can theoretically be controlled using only a single set of RWs. In other words, EMFF requires uncontrolled RWs with the same degree of freedom as the angular momentum of the system. Let $\bm\xi_m$ define the uncontrolled states. In this case, the controlled states $v$ of the EMFF state $\zeta$ are defined as
%Now, the kinematics of EMFF satisfying the constraints of the system can be derived. %using the relationship between $\zeta$ and $v$. 
By using the EMFF states $\zeta$, the angular momentum of the system (see Eq.~\eqref{sec3:eq3}) is expressed as
\begin{equation}
    \label{sec3:eq5}
     \left \{
   \begin{aligned}
    A\zeta=&0\ ,\ \zeta=
    \left[
   ^i\dot{r}^{\mathrm{T}},\ 
   ^b\omega^{\mathrm{T}}
    \right]^{\mathrm{T}}\in \mathbb{R}^{(6n-3)\times 1}.\\
   A=&\left [
   m_2(^i\tilde{r}_2-\ ^i\tilde{r}_1)\right . ,\cdots,m_n(^i\tilde{r}_{n}-\ ^i\tilde{r}_1),\ \\
   &C^{I/B_1}J_1,\cdots,C^{I/B_{n}}J_n ]\in \mathbb{R}^{3\times(6n-3)}
    \end{aligned}
    \right.
  \end{equation}  
%ここで行列$A$の零空間として滑らかで線形独立なベクトル場のフルランク行列$S(q)$を定義する　i.e.,
Let $S\in \mathbb{R}^{(6n-3)\times(6n-6)}$ be defined as a smooth and linearly independent vector field 
full rank matrix corresponding to the null space of matrix $A$, i.e.,
\begin{equation}
   \label{sec3:eq6}
   \left \{
   \begin{aligned}
  S=&NullSpace(A)=
 \begin{bmatrix}
      E_{(6n-6)} \\
      -C^{B_{n}/I}A_s
    \end{bmatrix}\in \mathbb{R}^{(6n-3)\times(6n-6)}\\
   A_s=&\left [
   m_2(^i\tilde{r}_2-\ ^i\tilde{r}_1)\right . ,\cdots,m_n(^i\tilde{r}_{n}-\ ^i\tilde{r}_1),\ \\
   &C^{I/B_1}J_1,\cdots,C^{I/B_{n-1}}J_{n-1} ]\in \mathbb{R}^{3\times(6n-6)}
    \end{aligned}
    \right.
  \end{equation}  
where $S$ denotes the tangent space of the manifold in which the angular momentum of the system does not change.  %Then, the relationship between $\zeta$ and $v$ is derived as follows: 
%\begin{equation}
%   \label{sec3:eq6-2-2}
%   \zeta=Sv\in \mathbb{R}^{(6n-3)\times 1}.
%\end{equation}  
By using the MRP kinematic differential equation of the $i$-th satellite \cite{schaub2003analytical}, the kinematics of the EMFF system can be expressed as
\begin{equation}
  \label{sec3:eq7}
  \left \{
  \begin{aligned}
    \dot{q}&=\hat{Z}\zeta\in \mathbb{R}^{(6n-3)\times 1}\\
    \zeta&=Sv\in \mathbb{R}^{(6n-3)\times 1}
  \end{aligned}
  \right .\ , \  
   \hat{Z}=
    \begin{bmatrix}
    E_{3n-3}&0\\
    0&[Z]
  \end{bmatrix}\in \mathbb{R}^{(6n-3)\times (6n-3)},
\end{equation}
where $
\dot{q}=
    \begin{bmatrix}
   \ ^i\dot{r}^{\mathrm{T}},
   \dot{\sigma}^{\mathrm{T}}
    \end{bmatrix}^{\mathrm{T}}\in \mathbb{R}^{(6n-3)\times 1}
$ \ is the time derivative of the generalized coordinates and $[Z]\in \mathbb{R}^{3n\times 3n}$ is defined by following the MRP kinematic differential equation of the system, which is expressed as
\begin{equation}
  \begin{aligned}
  \label{sec2:eq4}
   \dot\sigma&=
   \begin{bmatrix}
    \dot\sigma_1\\
    \vdots\\
    \dot\sigma_n\\
  \end{bmatrix}
   =[Z]\begin{bmatrix}
    ^{b_1}\omega_1\\
    \vdots\\
    ^{b_n}\omega_n, \\
  \end{bmatrix}\\
  &=
    \begin{bmatrix}
    Z(\sigma_1)&&0\\
    &\ddots&\\
    0&&Z(\sigma_n)\\
  \end{bmatrix}
  \begin{bmatrix}
    ^{b_1}\omega_1\\
    \vdots\\
    ^{b_n}\omega_n, \\
  \end{bmatrix}\in \mathbb{R}^{3n\times 1}\\
   Z(\sigma_i)&=\frac{1}{4}\left[(1-\sigma_i^{\mathrm{T}}\sigma_i)E_3+2\tilde{\sigma}_i+2\sigma_i\sigma_i^{\mathrm{T}}\right]\\
  \end{aligned}
\end{equation}
\subsection{Electromagnetic Formation Flight Dynamics}
\label{sec2-3}
%This section outlines the relative translational and the absolute attitude dynamics for the controller design. %First, the definition of the coordinate frame and the relative translational dynamics of satellite formation flying are outlined. Then, the attitude representation and dynamics of monolithic rigid satellites are outlined. Subsequently, 
This section outlines the dynamics of the EMFF system. The equation of motion describing the EMFF system is shown in Eq.~\eqref{sec3:eq8} ~ \cite{takahashi2021simultaneous,takahashi2022kinematics};  it combines the relative translational dynamics with the attitude dynamics of rigid spacecraft.
%Section 2で導出したSatellite Formation FlyingのTranslational and attitude equationに基づいて
%EMFFシステムの運動方程式とシステム全体の角運動量$L$はシステムの状態量$\zeta$と
%行列$R\in \mathbb{R}^{3\times (6n+3m-3)}$を用いてEq.~\eqref{sec3:eq9}で表される．
\begin{equation}
  \label{sec3:eq8}
  %\left \{
  \ [M]\dot{\zeta}+
    [C]\zeta%+
    %[G]
  =
    u_c-A^{\mathrm{T}}\eta
  %\ ^b\dot{h}&=\ ^{b}\tau_{RW}\\
  %\frac{^{I}d}{dt}L&=Ru_{d}
   % \right,
\end{equation}
where
$
u_c=
\begin{bmatrix}
   \ ^if_c^{\mathrm{T}},
   \ ^{b}\tau_c^{\mathrm{T}}
    \end{bmatrix}^{\mathrm{T}}
    \in \mathbb{R}^{(6n-3)\times 1}$ is the control input, and $\eta\in \mathbb{R}^{3\times 1}$ is the vector of constraint forces.
    %and 
%$u_{d}=
%    \begin{bmatrix}
%      \ ^if_{g}^{\mathrm{T}},
%   \ ^{b}\tau_{g}^{\mathrm{T}}
%   \end{bmatrix}^{\mathrm{T}}
%   \in \mathbb{R}^{(6n-3)\times 1}$ is an external force input. 
%Note that
%$u_d$ is replaced by constraint term $-A^{\mathrm{T}}\eta$ 
 Matrices $[M]\in \mathbb{R}^{(6n-3)\times (6n-3)}$, $[C]\in \mathbb{R}^{(6n-3)\times (6n-3)}$ are expressed as 
\begin{equation}
  \label{sec3:eq9}
  \ [M]=
  \begin{bmatrix}
    [M_p]&0\\
    0&[M_a]\\
  \end{bmatrix}\,\ 
  [C]=
  \begin{bmatrix}
    0_{3n}&0\\
    0&[C_a]
  \end{bmatrix}\in \mathbb{R}^{(6n-3)\times (6n-3)}.
\end{equation}
Since the position of the 1st satellite $\bm r_1$ can be removed from the generalized coordinates \cite{takahashi2021simultaneous,takahashi2022kinematics}, mass matrix $[M_p]\in \mathbb{R}^{(3n-3)\times (3n-3)}$ of all satellites except the
1st satellite can be expressed as
  \begin{equation}
  \label{sec2:eq2}
  \ [M_p]=diag(diag(m_2,m_2,m_2),\cdots ,diag(m_n,m_n,m_n))
  %\begin{bmatrix}
  %  diag(m_2,m_2,m_2)&&0\\
  %  &\ddots&\\
  %  0&&diag(m_n,m_n,m_n)
  %\end{bmatrix}
  %\in \mathbb{R}^{(3n-3)\times (3n-3)}.
\end{equation}
Following Eq.~\eqref{sec2:eq5-2}, the matrices $[M_a]\in \mathbb{R}^{3n\times 3n}$ and $[C_a]\in \mathbb{R}^{3n\times 3n}$ are defined with respect to the attitude motion 
of the entire system:  
  \begin{equation}
  \label{sec2:eq5-2}
  \left \{
  \begin{aligned}
  \ [M_a]&=diag(J_1,\cdots,J_n)
 % \begin{bmatrix}
 %   J_1&&0\\
 %   &\ddots&\\
 %   0&&J_n
 % \end{bmatrix}
 \,\ 
 \ 
 \\
 [C_a]&=diag\left(-(J_1\ ^{b_1}\omega_1)\ \bm{\tilde{}},\cdots ,-(J_n\ ^{b_n}\omega_n)\ \bm{\tilde{}}\right)
 % \begin{bmatrix}
 %   -(J_1\ ^{b_1}\omega_1)\ \bm{\tilde{}}&&0\\
 %   &\ddots&\\
 %   0&&-(J_n\ ^{b_n}\omega_n)\ \bm{\tilde{}}\ \ 
 % \end{bmatrix}%\ ,\ 
  %[G_a]=0_{3n\times 1}
  \end{aligned}
  \right.
  %\in \mathbb{R}^{3n\times 3n}.
\end{equation}
% The equation of motion of the EMFF system is derived 
%using the kinematics of Eq.~\eqref{sec3:eq7}. 
Now, substituting $\zeta=Sv$ and $\dot{\zeta}=\dot{S}v+S\dot{v}$ 
into Eq.~\eqref{sec3:eq8}, and multiplying by the matrix $S^{\mathrm{T}}$ yields 

\begin{equation}
\label{sec3:eq10}
    \overline{M}\dot{v}+\overline{C}v%+\overline{G}
    =S^{\mathrm{T}}
    u_c,
\end{equation}
where $\overline{M}=S^{\mathrm{T}}[M]S\in \mathbb{R}^{(6n-6)\times (6n-6)}$ is a symmetric and positive definite matrix and
$\overline{C}=S^{\mathrm{T}}([M]\dot{S}+[C]S)\in \mathbb{R}^{(6n-6)\times (6n-6)}$.
 %$\overline{G}=S^{\mathrm{T}}[G]$ 
%Note that $\dot{\overline{M}}-2\overline{C}$ is skew symmetric.
In this case, the constraint force term $-A^{\mathrm{T}}\eta$
disappears, and the equation does not account for the constraint force due to angular momentum conservation.
\section{Nonholonomic Controller Design of Electromagnetic Formation Flight}
\label{sec3}
In this section, the controllability of the EMFF system is discussed using the conservation of angular momentum, which constitutes the system's nonholonomic constraints. This section designs a new controller for multiple EMFF satellites without an additional attitude actuator.
%using the dynamics of EMFF derived in Sec.~\ref{sec2-3},
\subsection{Local Controllability of EMFF}
In this subsection, it is confirmed that the EMFF system of Eq.~\eqref{sec3:eq8} satisfies the small-time local controllability (STLC), mainly based on Sussmann's work \cite{sussmann1987general}. This controllability property guarantees the existence of a piecewise analytic feedback law \cite{sussmann1979subanalytic} and continuous time periodic feedback laws \cite{coron1995stabilization} that asymptotically steer arbitrary states into every state. STLC is defined as follows.

\textbf{Definition 1 \cite{sussmann1987general}.}\ \textit{
If some trajectory such that $x(0)= p$ reach $x(T)=q$, then $q$ will be said to be reachable from p in time T. The set of all q that are reachable from p in time T for the system is the time T reachable set from p, and will be denoted by $Reach(T, p)$. Similarly, $Reach(\leq T, p)$ is expressed by the following equation.  
\begin{equation}
  Reach(\leq T, p)=\bigcup_{0\leq t\leq T} Reach(t, p)
\end{equation}
The system is STLC from $p$ if $p$ is an interior point of $Reach(\leq T, p)$ for all $T> 0$.}

First, EMFF dynamics and kinematics of Eqs.~\eqref{sec3:eq8} and \eqref{sec2:eq4} are modified to the standard control system form for discussion of the STLC property of EMFF. Let the control input $u_c$ be defined as $u_c=[M]Su$ with a new input $ u\in\mathbb{R}^{(6n-6)}$. The equation of motion of Eq.~\eqref{sec3:eq8} is expressed by the following Eq.~\eqref{eq:4-1}.

\begin{equation}
  \label{eq:4-1}
   \begin{aligned}
  &\left \{
 \begin{aligned}
  \dot{q}&=\hat{Z}\zeta\\
 \dot{\zeta}&=-\ [M]^{-1}
    [C]\zeta%+
    %[G]
  +\sum_{i=1}^{6n-3} S_i(x)u_i
 \end{aligned}
 \right .\\
 &S=\begin{bmatrix}
   S_1\in \mathbb{R}^{(6n-3)}&\cdots&S_{6n-6}\in \mathbb{R}^{(6n-3)}
 \end{bmatrix}
\end{aligned}
\end{equation}
Now, Eq.~\eqref{eq:4-1} defines EMFF state $x=[q^{\mathrm{T}},\zeta^{\mathrm{T}} ]^{\mathrm{T}}$, a drift vector field $f_0$, and control vector fields  $g_i(i=1,\ldots,6n-6$), according to the following standard control system form of Eq.~\eqref{eq:3-2}.
\begin{equation}
  \label{eq:3-2}
  \begin{aligned}
\dot{x}&=f_0(x)+\sum_{i=1}^{6n-6} g_i(x)u_i\\
f_0(x)&=\begin{bmatrix}
  \hat{Z}\zeta\\
  -\ [M]^{-1}[C]\zeta
\end{bmatrix},\ g_i(x)=\begin{bmatrix}
   0_{6n-3}\\S_{i}
\end{bmatrix}\in \mathbb{R}^{(12n-6)}
  \end{aligned}
\end{equation}
where $x\in \mathbb{R}^{(12n-6)\times 1}$, $u\in \mathbb{R}^{(6n-6)\times 1}$.

Next, local accessibility, which is a necessary condition for STLC, is examined. This condition is also called Lie Algebra Rank Condition \cite{sussmann1987general}. The definition of local accessibility is given by the Lie bracket and the accessibility distribution $\bm\Delta$, which represents the open set reachable in the neighborhood of $\bm x$.
\begin{definition}[\cite{nijmeijer1990nonlinear}]\ \textit{The system of Eq.~\eqref{eq:3-2} is locally accessible from $x_0$ if $Reach(T,x_0)$ contains a non-empty open set for all neighborhoods $V$ of $x_0$ and all $T> 0$. If this holds for any $x_0$, then the system is called locally accessible.}
\end{definition}
\begin{definition}
The Lie bracket of $\bm f$ and $\bm g$ denoted by [$\bm f$,$\bm g$] is a third vector field defined by
\begin{equation}
  \begin{aligned}
     \ [f,g](q)= \frac{\partial g}{\partial q}(q)f(q)-\frac{\partial f}{\partial q}(q)g(q)   
\end{aligned}
\end{equation}
\end{definition}
\begin{definition}
Consider all the vector fields obtained from the following Lie Bracket of $\bm f_i$.
\begin{equation}
  \begin{aligned}
    \ [\bm r_k,[\bm r_{k-1},[\cdots,[\bm r_2,\bm r_1]\cdots]]]\ \ k=2,3,4,\cdots\\
    \bm r_i \in \{\bm f_0, \bm g_i\cdots, \bm g_{6n-6} \}
  \end{aligned}
\end{equation}
The linear space $\bm\Delta$ in which these vector fields and the vector field $\bm f_i$ extend is called the accessibility distribution.
\end{definition}
%Note that duplication is allowed as $\bm r_i = \bm r_j = \bm f_k$ ($i$ $\neq$ $j$) in the selection of $\bm r_i$.}\\
With the above definitions, the theorem about local accessibility is shown as follows.
\begin{theorem}[\cite{nijmeijer1990nonlinear,sussmann1987general}]The system of Eq.~\eqref{eq:3-2} is local accessibility if dim($\bm\Delta$) = $12n-6$ at every $x$
\end{theorem}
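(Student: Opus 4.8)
The statement is the accessibility form of the Lie Algebra Rank Condition (Krener's theorem, the drift version of Chow--Rashevskii), so the plan is to reproduce its standard proof, specialized to the control-affine vector field $f_0 + \sum_i g_i u_i$. Write $N := 12n-6$ for the state dimension and let $\mathcal{F} := \{\,f_0 + \sum_{i=1}^{6n-6} g_i u_i : u \in \mathbb{R}^{6n-6}\,\}$ be the family of vector fields whose forward flows are realizable by holding the control constant. The first observation is that the real Lie algebra generated by $\mathcal{F}$ equals the one generated by $\{f_0, g_1, \dots, g_{6n-6}\}$: a Lie algebra generated by a set equals the one generated by its $\mathbb{R}$-linear span, and since $f_0 + \sum_i g_i u_i$ is affine in $u$, that span is exactly $\mathrm{span}\{f_0, g_1, \dots, g_{6n-6}\}$ (take $u=0$ to recover $f_0$, then differences to recover each $g_i$). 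Hence the distribution spanned by all iterated brackets of $\mathcal{F}$ is precisely $\bm\Delta$, which by hypothesis has dimension $N$ at every $x$.

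Next I would introduce, for a finite word $X_1, \dots, X_k \in \mathcal{F}$ and nonnegative times, the composition-of-flows map
\[
  \rho(t_1, \dots, t_k) = \bigl(\phi^{X_k}_{t_k} \circ \cdots \circ \phi^{X_1}_{t_1}\bigr)(x_0),
\]
whose image lies in $Reach(t_1 + \cdots + t_k, x_0)$. Let $r$ be the supremum of $\mathrm{rank}\, d\rho$ over all words, all $k$, and all admissible times; by lower semicontinuity of the rank this supremum is attained at some word and some time $t^\ast = (t_1^\ast, \dots, t_k^\ast)$ that may be taken with every $t_i^\ast > 0$ (any vanishing coordinate can be perturbed into the open set on which the rank is maximal). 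Near $\rho(t^\ast)$ the image of $\rho$ is then an embedded $r$-dimensional submanifold $S$ contained in a reachable set at a time close to $T^\ast := \sum_i t_i^\ast$.

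The heart of the argument is to show $r = N$. I claim every $X \in \mathcal{F}$ is tangent to $S$: if some $X$ were transverse to $S$ at a point of $S$, appending one more short, strictly-positive-time arc $\phi^X_s$ to the word would yield a composition map of rank $r+1$, contradicting maximality. Tangency of all of $\mathcal{F}$ forces tangency of $f_0$ (put $u=0$) and of each $g_i$ (take differences), and the Lie bracket of two vector fields tangent to a submanifold is again tangent to it; iterating, every element of the accessibility Lie algebra is tangent to $S$, so $\bm\Delta(y) \subseteq T_y S$ for $y \in S$. Since $\dim \bm\Delta = N$, this gives $\dim S = r \geq N$, hence $r = N$ and $S$ is open in $\mathbb{R}^N$. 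Thus $Reach(\leq T, x_0)$ contains a nonempty open set for every $T > 0$ (run the construction using only words of total time less than $T$, which is harmless since the appended arcs above can be made arbitrarily short), and composing with a suitable forward drift arc to fix the terminal time puts an open set inside $Reach(T, x_0)$. As $x_0$ is arbitrary, the system is locally accessible.

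The step I expect to cause the most trouble is the bookkeeping around the drift. Unlike the driftless (symmetric) Chow argument, $f_0$ may only be flowed forward in time, so one must check that the maximal-rank word can genuinely be realized with all times strictly positive, that appending a \emph{forward} $f_0$-arc (not a backward one) still detects transversality, and that the resulting submanifold lies in a reachable set at a controllable terminal time --- a point which is delicate precisely when $f_0$ vanishes at the state in question, as happens for the EMFF field of Eq.~\eqref{eq:3-2} at rest. This is exactly where the proof must follow Sussmann's treatment \cite{sussmann1987general} rather than the elementary Chow argument; the remaining steps are standard differential topology.
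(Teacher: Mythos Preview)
Your proof outline is essentially correct: it is the standard Krener/Sussmann argument for the Lie Algebra Rank Condition, and the flagged subtlety about forward-only drift flows is exactly the right place to be careful. However, the paper does not prove this theorem at all---it merely states it as a cited result from \cite{nijmeijer1990nonlinear,sussmann1987general} and immediately proceeds to apply it by computing enough Philip Hall brackets to verify $\dim\bm\Delta = 12n-6$. So there is nothing to compare your proposal against; you have supplied a full proof where the paper deliberately outsources one. If anything, your write-up is more complete than what the paper requires: for the paper's purposes the theorem is a black box, and all the work is in the subsequent bracket computation of Eq.~\eqref{eq:3-3-2} and the Appendix.
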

Then, a simple calculation using the Philip Hall basis \cite{murray2017mathematical} of Eq.~\eqref{eq:3-2} (see Sec.~\ref{Appendix}) showed that the system satisfied local accessibility at every $x$ from the following results.
\begin{equation}
  \label{eq:3-3-2}
  \begin{aligned}
\bm\Delta=span\left\{[f_0,g_i],[g_1,[f_0,g_2]],[g_2,[f_0,g_3]],[g_3,[f_0,g_1]],\right .\\
\left .g_i,[g_1,g_2],[g_2,g_3],[g_3,g_1]\right\}=\mathbb{R}^{12n-6},\ (i=1,\cdots,6n-6)
  \end{aligned}
\end{equation}
Note that local accessibility is only a necessary condition for drift systems to satisfy STLC, although STLC follows from local accessibility for driftless systems \cite{nijmeijer1990nonlinear,hermann1977nonlinear}. 

Before confirming if the system of Eq.~\eqref{eq:3-2} meets the sufficient conditions of STLC, the classification of Lie bracket is introduced. For an arbitary Lie bracket $h$, $\delta_0(h)$ and $\delta_i(h)$ indicate the number of times that a drift vector field $f_0$ and control vector fields $g_i(i=1,\ldots,6n-6)$ appear in $h$, respectively. At this time, if $\delta_0(h)$ is odd and all $\delta_i(h)$ are even, the Lie bracket $h$ is defined as \textit{bad}. Otherwise, the bracket $h$ is defined \textit{good}. Using this classification, the sufficient conditions for STLC are given by the following theorem.
\begin{theorem}[\cite{sussmann1987general}]The system is STLC at zero velocity if every bad symmetric product is a linear combination of lower-order good symmetric products.
\end{theorem}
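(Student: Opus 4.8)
The plan is to prove Sussmann's sufficient condition by analyzing the endpoint map of the system and showing that, under the stated neutralization hypothesis, the small-time reachable set $Reach(\le T,p)$ contains a full neighborhood of the equilibrium $p$ at zero velocity (where $f_0(p)=0$; note that for Eq.~\eqref{eq:3-2} the drift vanishes when $\zeta=0$). First I would expand the flow along an admissible control $u$ as a Chen--Fliess / chronological series: the endpoint admits an asymptotic expansion $x(T;u)\sim p+\sum_{B}\Upsilon_B(u)\,B(p)$, where $B$ ranges over a Philip Hall basis \cite{murray2017mathematical} of iterated brackets of $\{f_0,g_1,\dots,g_{6n-6}\}$ and $\Upsilon_B(u)$ is the associated iterated-integral functional of the controls. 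Local accessibility (the rank condition established above) guarantees that the vectors $B(p)$ span $\mathbb{R}^{12n-6}$, so the remaining task is to \emph{sign-control} each direction with vanishingly small horizon $T$.

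Second, I would introduce Sussmann's $\theta$-weighted dilation: assign the controls $g_i$ weight one and the drift $f_0$ a discounted weight $\theta\in[0,1]$, so that a bracket $B$ carries the $\theta$-degree $\|B\|_\theta=\theta\,\delta_0(B)+\sum_i\delta_i(B)$. Rescaling the control amplitude and the horizon by a parameter $\varepsilon$ according to this dilation converts the endpoint expansion into a graded one in which the lowest $\theta$-degree brackets dominate $x(T;u)-p$, while the higher-degree terms are $o(\varepsilon^{\|B\|_\theta})$ remainders. Because $f_0(p)=0$, the drift is genuinely higher-order under this weighting, and the problem reduces to a statement about the leading homogeneous component, equivalently about the homogeneous nilpotent approximation of the system.

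Third --- and this is the crux --- I would exploit the parity structure encoded in the bad/good classification. Choosing control variations that are symmetric (or antisymmetric) in time forces $\Upsilon_B(u)$ to vanish precisely on the \emph{bad} brackets, those with $\delta_0(B)$ odd and every $\delta_i(B)$ even, since their iterated-integral functionals are odd under the corresponding time reflection; meanwhile the \emph{good} brackets retain a sign-definite, steerable contribution. The neutralization hypothesis --- that every bad bracket is a linear combination of strictly lower $\theta$-degree good brackets --- then guarantees that whatever residual bad-bracket direction survives at a given degree already lies in the span of good directions realizable at lower order, so it can be pre-compensated rather than obstructing any coordinate. Hence the good brackets alone generate motions in $\pm$ every direction to leading order.

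Finally, I would close with a topological open-mapping argument. Having produced, for each basis direction $e_k$ and each sign, a control steering $x(T;u)$ to $p+\varepsilon^{w_k}(\pm e_k)+o(\varepsilon^{w_k})$, I would assemble these into a continuous map from a small parameter ball into state space whose leading part is a linear isomorphism; a Brouwer-degree (or Sussmann--Jurdjevic open-mapping) argument then shows the map covers a neighborhood of $p$ for every $T>0$, so $p$ is an interior point of $Reach(\le T,p)$ and the system is STLC at zero velocity \cite{sussmann1987general}. The main obstacle I anticipate is the third step together with the uniform remainder bounds of the second: rigorously establishing that symmetric control families annihilate exactly the bad brackets, and that the neglected higher-$\theta$-degree terms of the series cannot overwhelm the leading good-bracket directions, is the delicate combinatorial-analytic heart of the theorem.
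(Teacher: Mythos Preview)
The paper does not prove this theorem at all: it is stated as a quoted result from \cite{sussmann1987general} and is simply \emph{applied} (via Eq.~\eqref{eq:3-3-2}) to conclude STLC of the EMFF system. There is therefore no ``paper's proof'' to compare your proposal against; the authors defer entirely to Sussmann.

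That said, your sketch is a reasonable outline of the argument actually found in \cite{sussmann1987general}: Chen--Fliess expansion of the endpoint map, the $\theta$-weighted grading to isolate a homogeneous leading part, a parity analysis separating good from bad brackets, and an open-mapping step. Two points of imprecision are worth flagging. First, in your step three the mechanism is slightly misdescribed: bad brackets are not the ones whose iterated-integral functionals \emph{vanish} under a symmetry of the control; rather, because every $\delta_i$ is even, their contributions are \emph{even} in the control amplitudes and hence cannot be sign-reversed by flipping any $u_i$, which is exactly why they are potential obstructions. The neutralization hypothesis then lets you absorb each such obstruction into strictly lower-$\theta$-degree good directions, which \emph{can} be sign-controlled. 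Second, the paper's statement is phrased in terms of \emph{symmetric products} (the Bullo--Lewis/Lewis--Murray formulation tailored to second-order mechanical systems at zero velocity), not arbitrary Lie brackets; your sketch treats the general Sussmann bracket condition, which is the right underlying theorem but not literally the specialization quoted here.
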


By Eq.~\eqref{eq:3-3-2}, it is shown that the system of Eq.~\eqref{eq:3-2} satisfies the sufficient conditions of STLC.
%then the system is strongly accessible at, i.e., the system satisfies a necessary condition for STLC at. As shown in [17], all the bad brackets can be written as linear combinations of the bad brackets contained in. Thus, a sufficient condition for STLC to hold can be obtained by considering the bad brackets in and applying the Bianchini-Stefani condition [3].
%According to Sussmann’s theorem [35], the small-time local controllability (STLC) of the affine 3 ]?, ξ? ∈ R3 with u1 = u2 = 0. System (2) is easily confirmed. For a Lie bracket h, let δ0(h), δ1(h), and δ2(h) be defined as the number of times that f, g1, and g2 occur in h, respectively. If δ0(h) is odd and δi(h) is even for each i = 1, 2, the bracket h is called “bad”; otherwise, the bracket h is called “good”. Also, let ∑2 i=0 δi(h) be defined as the degree of h. Then, in Philip Hall's bases [1] of (2), the non-zero vector fields at ze are as follows:
%We apply the definitions of controllability and accessibility by Nijmeijer and van der Schaft (1990), and the Lie algebra rank condition (LARC) and the definition of short-time locally controllability (STLC) by Sussmann (1987). For driftless systems, controllability follows from LARC (Nijmeijer and van der Schaft, 1990). 
These controllability analysis guarantees
% An admissible control input that can control the system from any equilibrium point to its neighborhood for a short period exists, and the existence of a piecewise analytic feedback law that asymptotically steers any state into every state.
%\subsection{Second Order Averaging of Dipole Modulation of Alternating Current Method}
\subsection{Time-Varying Kinematics Controller Design}
%We can convert the aforementioned dynamical model given by Eq.\cite{} into the following form
This subsection derives a time-varying feedback control law that guarantees that the origin of the EMFF system is locally exponentially stable, based on a previous study \cite{m1998time}. By combining EMFF kinematics of Eq.~\eqref{sec3:eq7} and EMFF dynamics of Eq.~\eqref{sec3:eq10}, the following dynamical model is derived.
\begin{equation}
  \label{eq:4-2-2}
    \left \{
  \begin{aligned}
    \dot{q}&=\hat{Z}Sv\\  
    \overline{M}\dot{v}&=-\overline{C}(q,\dot{q})v+S^{\mathrm{T}}u_c
\end{aligned}
    \right.
\end{equation}
%where $\overline{M}(q)=S^{\mathrm{T}}[M]S$ is a symmetric and positive definite inertia matrix, $\overline{C}(q,\dot{q})=S^{\mathrm{T}}([M]\dot{S}+[C]S)$ is the centripetal and coriolis matrix.
Given $n$ satellites with the dynamical model described by Eq.~\eqref{eq:4-2-2}. Then, the control law $u_c$ in Eq.~\eqref{eq:4-2-3} is applied. Then, the origin of the system is  guaranteed locally exponentially stable.
%Then, the origin ep ⫽ 0 is uniformly asymptotically stable,
%and the velocity vector of the mobile base satisfies v 씮 vc as t 씮 앝.
%システムを構成する衛星のうち最低一つの衛星が３つのRWを持っているとき ( $1\leq m\leq n$ )
%Eq.~\eqref{sec3:eq11}の制御アルゴリズムによってEMFFシステムの状態量$\zeta$は目標値に(大域的)漸近収束し，
%制御則によって導出された制御力は角運動量を変化させない．
\begin{equation}
  \label{eq:4-2-3}
   \begin{aligned}
    u_c&=
\begin{bmatrix}
   \ ^if_{EM}^{(avg)\mathrm{T}},
   \ ^{b}\tau_{EM}^{(avg)\mathrm{T}}
    \end{bmatrix}^{\mathrm{T}}=[M]Su\\
    &\left \{
  \begin{aligned}
    u&=-K(v-w)\\
   % % u_c&=-[M]SK(v-w(q,v,t))\\
   %w(q,v,t)&=\sum_{j=1}^{6n-3} \sum_{s=1}^{l(j)} u_{j,s}S_{\tau_j^s} v_j^s
   w&=\sum_{j=1}^{6n-6} S_{j} v_j^1
            +\sum_{j=6n-5}^{6n-3} u_j^1\hat{S}_{j}^1 v_j^1
            +\sum_{j=6n-5}^{6n-3} u_j^2\hat{S}_{j}^2 v_j^2\\
   \end{aligned}
   \right .
   \end{aligned}
 \end{equation}

\begin{equation}
   \label{eq:4-2-3-2}
  %\left \{
  \begin{aligned}
    &\left \{
      \begin{aligned}
    v_i^1&=  
    \left \{
   \begin{aligned}
   &\tilde{u}_i\ &  \\
   &\tilde{u}_i+h_i^2 \ &(i = 3n-5, \ldots, 3n-3)  \\
   &\rho \ &(i=6n-5, \ldots, 6n-3)  \\
  \end{aligned}
    \right. \\
   v_i^2&=  
    %\left \{
   \begin{aligned}
   %&0\ &  \\
   &\frac{\tilde{u}_i}{\rho} \ &(i=6n-5, \ldots, 6n-3)\\
  \end{aligned}\\
  \end{aligned}
  \right .\\
  &\left \{ 
      \begin{aligned}
    u_{6n-5}^1&=\epsilon^{-\frac{1}{2}}\cos(\omega_1 t/\epsilon),\ u_{6n-5}^2=2\omega_1\epsilon^{-\frac{1}{2}}\sin(\omega_1 t/\epsilon)\\
    u_{6n-4}^1&=\epsilon^{-\frac{1}{2}}\cos(\omega_2 t/\epsilon),\ u_{6n-4}^2=2\omega_2\epsilon^{-\frac{1}{2}}\sin(\omega_2 t/\epsilon)\\
    u_{6n-3}^1&=\epsilon^{-\frac{1}{2}}\cos(\omega_3 t/\epsilon),\ u_{6n-3}^2=2\omega_3\epsilon^{-\frac{1}{2}}\sin(\omega_3 t/\epsilon)\\
    \end{aligned}
    \right .\\
    &\left \{
      \begin{aligned}
    \hat{S}_{6n-5}^1&=S_2,\ \hat{S}_{6n-5}^2=S_3\\
    \hat{S}_{6n-4}^1&=S_3,\ \hat{S}_{6n-4}^2=S_1\\
    \hat{S}_{6n-3}^1&=S_1,\ \hat{S}_{6n-3}^2=S_2\\
    \end{aligned}
    \right .\\
\end{aligned}
%\right.
  \end{equation}
    % u_c&=[M]Su\\
    % u&=-K\left (v-\sum_{j=1}^{6n-3} \sum_{s=1}^{l(j)} u_{j,s}S_{\tau_j^s} v_j^s \right )
   %w(q,v,t)&=\sum_{j=1}^{6n-3} \sum_{s=1}^{l(j)} u_{j,s}S_{\tau_j^s} v_j^s
  %\end{aligned}
  %  \right .\\
  %&\left \{
  %\begin{aligned}
    \begin{equation}
     % \left \{
      \begin{aligned}
    %\right  \\
   \tilde{u}&=-
   \begin{bmatrix}
    S&[S_2,S_3]&[S_3,S_1]&[S_1,S_2]
   \end{bmatrix}^{-1}
    \begin{bmatrix}
   ^i{r}^{\mathrm{T}},
   {\sigma}^{\mathrm{T}}
    \end{bmatrix}^{\mathrm{T}}\\
     &\left \{
   \begin{aligned}
  h_{3n-5}^2=v_{6n-3}^2(L_{S_{3n-4}}v_{6n-3}^1)-v_{6n-4}^1(L_{S_{3n-3}}v_{6n-4}^1)\\
  h_{3n-4}^2=v_{6n-5}^2(L_{S_{3n-3}}v_{6n-5}^1)-v_{6n-3}^1(L_{S_{3n-5}}v_{6n-3}^1)\\
  h_{3n-3}^2=v_{6n-4}^2(L_{S_{3n-5}}v_{6n-4}^1)-v_{6n-5}^1(L_{S_{3n-4}}v_{6n-5}^1)
  \end{aligned}
  \right .\\
    \rho&=\left(\sum_{j=2}^{n} \left(r_{jx}^4+r_{jy}^4+r_{jz}^4\right) + \sum_{j=1}^{n} \left(\sigma_{jx}^2+\sigma_{iy}^2+\sigma_{jz}^2\right)\right)\\
   \end{aligned}
%  \right.
 \end{equation}
where $K$ are positive scalar gains. 
\begin{theorem}
The control force given by the control law $u_c$ in Eq.~\eqref{eq:4-2-3} will not change the angular momentum of the system $\bm L$. 
\end{theorem}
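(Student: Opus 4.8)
The plan is to differentiate the total angular momentum along the closed-loop motion and to isolate the single term that depends on $u_c$, which will turn out to vanish identically. From Eqs.~\eqref{sec3:eq3} and \eqref{sec3:eq5}, the inertial components of the total angular momentum are $\bm L = A\zeta$, with $A \in \mathbb{R}^{3\times(6n-3)}$ as in \eqref{sec3:eq5}. Differentiating gives $\dot{\bm L} = \dot A\,\zeta + A\,\dot\zeta$, and substituting $\dot\zeta = [M]^{-1}\bigl(u_c - [C]\zeta - A^{\mathrm{T}}\eta\bigr)$ from the equation of motion \eqref{sec3:eq8} yields
\[
  \dot{\bm L} = \bigl(\dot A\,\zeta - A[M]^{-1}[C]\zeta - A[M]^{-1}A^{\mathrm{T}}\eta\bigr) + A[M]^{-1}u_c .
\]

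The key step is then to insert the control law of \eqref{eq:4-2-3}. Because $u_c = [M]Su$, the control-dependent term collapses to $A[M]^{-1}[M]Su = ASu$, and $AS = 0$ since, by its defining property \eqref{sec3:eq6}, $S$ is a basis of the null space of $A$. Hence $A[M]^{-1}u_c = 0$ regardless of the feedback $u$; equivalently, $u_c = [M]Su$ is $[M]^{-1}$-orthogonal to $\mathrm{range}(A^{\mathrm{T}})$, so it exerts no generalized force against the angular-momentum constraint. It then remains to observe that the parenthesized group also vanishes: along any motion constrained by $\zeta = Sv$ one has $A\zeta = ASv = 0$ identically, so $\bm L\equiv 0$ and the drift, Coriolis, and constraint-force terms contribute nothing to $\dot{\bm L}$ — indeed the multiplier $\eta$ in \eqref{sec3:eq8} is exactly what keeps $A\zeta = 0$, i.e.\ conserves $\bm L$, when $u_c = 0$. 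Combining the two facts gives $\dot{\bm L} = 0$, which is the claim.

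The point that needs care — and essentially the only place where something could go wrong — is the attribution of the term $A[M]^{-1}u_c$: since $u_c$ and the constraint force $-A^{\mathrm{T}}\eta$ enter \eqref{sec3:eq8} jointly, one must check that applying $u_c$ does not excite a reactive $\eta$ that itself disturbs $\bm L$. This is settled by the very identity $ASu = 0$: the control produces no acceleration component along the constrained directions, so no compensating constraint force is called for, and the decomposition above is self-consistent. Everything else is routine linear algebra built on the block structure $A = [\,A_s,\ C^{I/B_n}J_n\,]$ and $AS = 0$; alternatively, one may argue directly from the reduced model \eqref{eq:4-2-2}, where $\zeta = Sv$ forces $A\zeta = ASv = 0$ for all $t$ so that $\bm L$ is constant by inspection — but the above makes explicit that it is the structural form $u_c = [M]Su$ of the commanded electromagnetic force and torque that is responsible.
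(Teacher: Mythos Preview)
Your argument is correct and hinges on the same identity the paper uses: the paper introduces the moment-arm matrix $R$ of Eq.~\eqref{sec3:eq11-4} (which is precisely your $A[M]^{-1}$, since $R[M]=A$ by block inspection) and concludes $Ru_c=R[M]Su=ASu=0$, identifying $Ru_c$ directly with the control-induced $\dot{\bm L}$. Your route via differentiating $\bm L=A\zeta$ and substituting \eqref{sec3:eq8} recovers the same cancellation $A[M]^{-1}[M]Su=ASu=0$, with the extra bookkeeping for the drift/Coriolis/constraint terms that the paper's physical identification of $Ru_c=\dot{\bm L}$ sidesteps.
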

\begin{proof}
Let $R(q)\in\mathbb{R}^{3\times(6n-3)}$ be defined as matrix that represents the following matrix: 
\begin{equation}
  \label{sec3:eq11-4}
  \begin{aligned}
%R=[(^i\tilde{r}_2-^i\tilde{r}_1),\cdots,(^i\tilde{r}_{n}-^i\tilde{r}_1),C^{I/B_1},\cdots,C^{I/B_{n}},0_1,\cdots,0_m]\\
 R=&\left [
   (^i\tilde{r}_2-\ ^i\tilde{r}_1)\right. ,\cdots,(^i\tilde{r}_{n}-\ ^i\tilde{r}_1),
   \ C^{I/B_1},\cdots,C^{I/B_{n}}]. 
  \end{aligned}
  \end{equation}
Matrix $R$ holds the following relationship of Eq.~\eqref{sec3:eq11-5}:
 \begin{equation}
  \label{sec3:eq11-5}
    R[M]S=AS=0 \ (\because S=NullSpace(A)).
\end{equation}
Based on Eq.~\eqref{sec3:eq11-5}, 
The rate of change of the angular momentum of the system $\ ^i\dot{L}$ caused by the control law is always 0, as shown by Eq.~\eqref{sec3:eq11-6}:
  \begin{equation}
  \label{sec3:eq11-6}
    Ru_c=R
\begin{bmatrix}
   \ ^if_c^{\mathrm{T}},
   \ ^{b}\tau_c^{\mathrm{T}}
    \end{bmatrix}^{\mathrm{T}}=\ ^i\dot{L}=0.
\end{equation}
\end{proof}
\section{Numerical Calculation}
\label{sec4}
To show the effectiveness of the designed control law, 
Simulations of formation reconfiguration in a three-satellite system controlled only by magnetic actuators, without using RWs, are conducted. Note that, in this numerical calculation, it is assumed that the AC frequency $\omega_f$ of Eq.~\eqref{sec2:eq13} is sufficiently large, and that the electromagnetic forces and electromagnetic torques could be completely approximated by the averaged values of Eq.~\eqref{sec2:eq15}. A mass of each satellite is set to be 3 kg and the inertia of the each satellite in the body-fixed frame are expressed as follows: 1-th satellite $J_{S/C1} = diag([1,2,3])$ kg$\cdot$m$^2$; 2-th satellite $J_{S/C2} = 2diag([1,2,3])$ kg$\cdot$m$^2$; 3-th satellite $J_{S/C3} = 3diag([1,2,3])$ kg$\cdot$m$^2$. It is assumed that satellite control is performed in an environment with no external forces. The initial values of the relative position and absolute attitude are random, and each target value is set to 0. Each gains $\epsilon$ and $K$ are set to be 0.1 and $30I_{12\times12}$. In addition, the angular frequencies of the time-varying terms in Eq.~\eqref{eq:4-2-3-2} are set as follows: $\omega_1=0.2$rad/s, $\omega_2=0.4$rad/s, and $\omega_3=0.6$rad/s. 

The results of controlling relative positions $^ir$ and absolute attitudes $\sigma$ via the designed control law are shown in Figs.~1 and 2, respectively. The plots of the three-dimensional position are also demonstrated in Fig.~3. It can be seen that the satellite system converges to the origin and the desired attitude of each satellite.

\begin{figure}[hbt!]
    \label{sec4:fig1}
\centering
\includegraphics[width=9truecm]{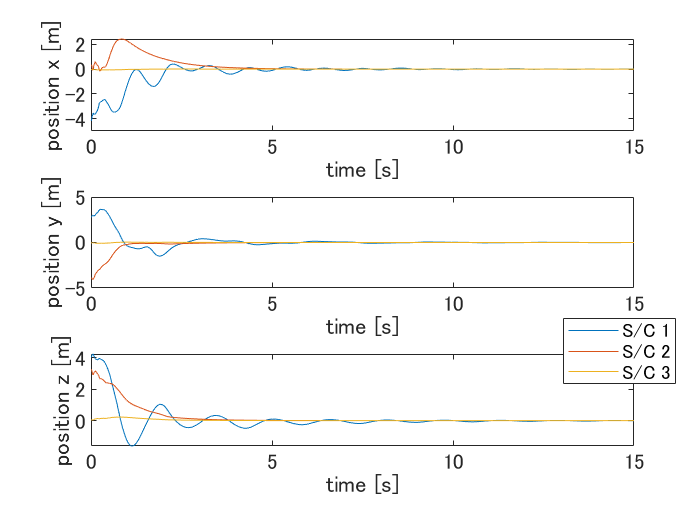}
\caption{Results of the simulation for three spacecraft-relative position-}
\end{figure}
\begin{figure}[hbt!]
    \label{sec4:fig2}
\centering
\includegraphics[width=9truecm]{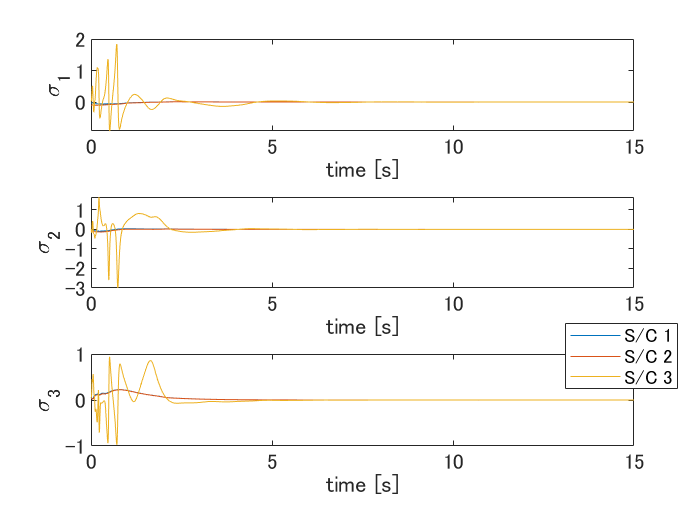}
\caption{Results of the simulation for three spacecraft-absolute attitude-}
\end{figure}
\begin{figure}[hbt!]
    \label{sec4:fig3}
\centering
\includegraphics[width=9truecm]{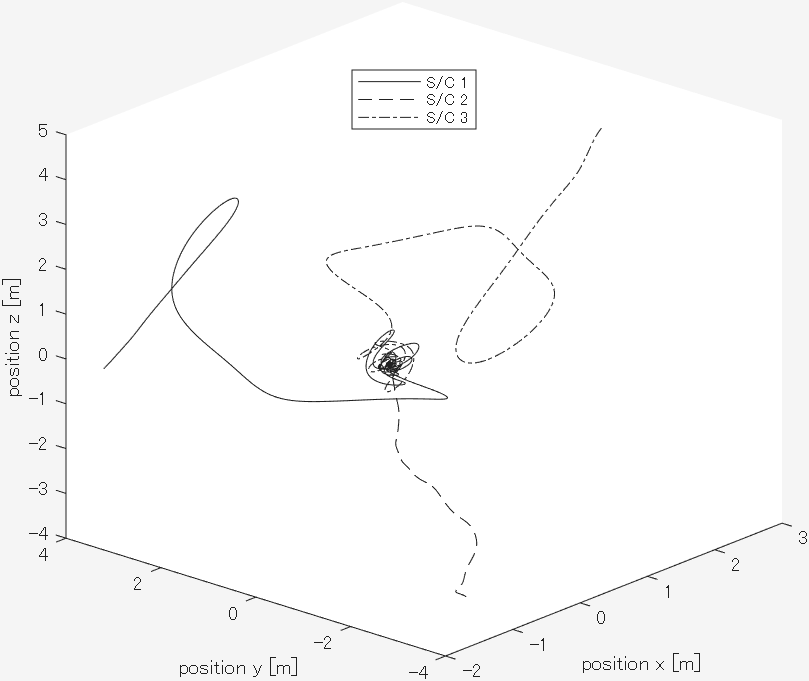}
\caption{Results of the simulation for three spacecraft-three-dimensional position-}
\end{figure}
\section{Conclusion}
\label{sec6}
This paper derives a time-varying feedback control law for the Electromagnetic Formation Flight system without using 3-axis reaction wheels, enabling the realization of distributed space systems consisting of small satellites. This controller guarantees that the system's origin is locally exponentially stable. After the controllability analysis of the Electromagnetic Formation Flight system without 3-axis reaction wheels, a new controller is designed using the conservation of angular momentum, a nonholonomic constraint of the system. Then, the effectiveness of the designed control law is demonstrated through simulations of formation reconfiguration for a three-satellite system. 
\section*{Appendix}
\label{Appendix}
\subsection*{Philip Hall Basis \cite{murray2017mathematical} of Eq.~\eqref{eq:3-2}}
%\begin{equation}
%  \label{eq:3-3}
%  \begin{aligned}
%\bm\Delta=span\left\{[f_0,g_i],[g_1,[f_0,g_2]],[g_2,[f_0,g_3]],[g_3,[f_0,g_1]],\right .\\
%\left .g_i,[g_1,g_2],[g_2,g_3],[g_3,g_1]\right\}=\mathbb{R}^{12n-6},\ (i=1,\cdots,6n-6)
%  \end{aligned}
%\end{equation}
\begin{equation}
  \begin{aligned}
     &[f_0,g_i]=\frac{\partial g_i}{\partial x}f_0-\frac{\partial f_0}{\partial x}g_i\\
     &=\begin{bmatrix}
       0_{(12n-9)\times (12n-6)}\\
       \begin{bmatrix}
       \frac{\partial S_i(6n-5)}{\partial x}\\
       \frac{\partial S_i(6n-4)}{\partial x}\\
       \frac{\partial S_i(6n-3)}{\partial x}
       \end{bmatrix}f_0
     \end{bmatrix}
     -\begin{bmatrix}
       E_{3n-3}&0_{(3n-3)\times 3n}\\
       0_{3n\times (3n-3)}&[Z]\\
       0_{(3n-3)\times (3n-3)}&0_{(3n-3)\times 3n}\\
       0_{3n\times (3n-3)}&\frac{\partial ([M]^{-1}[C]\zeta)}{\partial \omega}
     \end{bmatrix}S_i\\
\end{aligned}
\end{equation}
%\begin{equation}
%  \begin{aligned}
%[g_1,[f_0,g_2]]
%\end{aligned}
%\end{equation}
%[g_2,[f_0,g_3]]
%[g_3,[f_0,g_1]]
where $g_i=[0_{6n-3};S_{i}]$ and 
\begin{equation}
[S_1,S_2]=
\begin{bmatrix}
0_{(6n-6)\times 1} \\
\begin{bmatrix}
       \frac{\partial S_2(6n-5)}{\partial x}\\
       \frac{\partial S_2(6n-4)}{\partial x}\\
       \frac{\partial S_2(6n-3)}{\partial x}
\end{bmatrix}S_1
-
\begin{bmatrix}
       \frac{\partial S_1(6n-5)}{\partial x}\\
       \frac{\partial S_1(6n-4)}{\partial x}\\
       \frac{\partial S_1(6n-3)}{\partial x}
\end{bmatrix}S_2
\end{bmatrix}
\end{equation}
\begin{equation}
[S_2,S_3]=
\begin{bmatrix}
0_{(6n-6)\times 1} \\
\begin{bmatrix}
       \frac{\partial S_3(6n-5)}{\partial x}\\
       \frac{\partial S_3(6n-4)}{\partial x}\\
       \frac{\partial S_3(6n-3)}{\partial x}
\end{bmatrix}S_2
-
\begin{bmatrix}
       \frac{\partial S_2(6n-5)}{\partial x}\\
       \frac{\partial S_2(6n-4)}{\partial x}\\
       \frac{\partial S_2(6n-3)}{\partial x}
\end{bmatrix}S_3
\end{bmatrix}
\end{equation}
\begin{equation}
[S_3,S_1]=
\begin{bmatrix}
0_{(6n-6)\times 1} \\
\begin{bmatrix}
       \frac{\partial S_1(6n-5)}{\partial x}\\
       \frac{\partial S_1(6n-4)}{\partial x}\\
       \frac{\partial S_1(6n-3)}{\partial x}
\end{bmatrix}S_3
-
\begin{bmatrix}
       \frac{\partial S_3(6n-5)}{\partial x}\\
       \frac{\partial S_3(6n-4)}{\partial x}\\
       \frac{\partial S_3(6n-3)}{\partial x}
\end{bmatrix}S_1
\end{bmatrix}
\end{equation}
%\section*{Acknowledgments}
%\label{Acknowledgments}
%The editorial office appreciates authors' efforts to fully follow this template style when submitting the manuscript to Aerospace Technology Japan because the editorial office does not provide a typesetting service.
\bibliographystyle{IEEEtran}
\bibliography{references}
\end{document}